\newtheorem{theorem}{Theorem}[section]
\newtheorem{proposition}[theorem]{Proposition}
\newtheorem{remark}{Remark}[section]
\newtheorem{example}[theorem]{Example}
\newtheorem{condition}{Condition}[section]
\begin{document}
\title{Large deviations for risk measures in finite mixture
models\thanks{The support of Gruppo Nazionale per l'Analisi
Matematica, la Probabilit\`{a} e le loro Applicazioni (GNAMPA) of
the Istituto Nazionale di Alta Matematica (INdAM) is
acknowledged.}}
\author{Valeria Bignozzi\thanks{Dipartimento di Statistica e Metodi
Quantitativi, Universit\`{a} di Milano Bicocca, Via Bicocca degli
Arcimboldi 8, I-20126 Milano, Italia. e-mail:
\texttt{valeria.bignozzi@unimib.it}}\and Claudio
Macci\thanks{Dipartimento di Matematica, Universit\`a di Roma Tor
Vergata, Via della Ricerca Scientifica, I-00133 Roma, Italia.
e-mail: \texttt{macci@mat.uniroma2.it}}\and Lea
Petrella\thanks{Dipartimento di Metodi e Modelli per l'Economia,
il Territorio e la Finanza, Sapienza Universit\`{a} di Roma, Via
del Castro Laurenziano 9, I-00161 Roma, Italia. e-mail:
\texttt{lea.petrella@uniroma1.it}}}
\date{}
\maketitle
\begin{abstract}
\noindent Due to their heterogeneity, insurance risks can be
properly described as a mixture of different fixed models, where
the weights assigned to each model may be estimated empirically
from a sample of available data. If a risk measure is evaluated on
the estimated mixture instead of the (unknown) true one, then it
is important to investigate the committed error. In this paper we
study the asymptotic behaviour of estimated risk measures, as the
data sample size tends to infinity, in the fashion of large
deviations. We obtain large deviation results by applying the
contraction principle, and the rate functions are given by a
suitable variational formula; explicit expressions are available
for mixtures of two models. Finally, our results are applied to
the most common risk measures, namely the quantiles, the Expected
Shortfall and the shortfall risk measure.\\
\ \\
\emph{AMS Subject Classification.} Primary: 60F10, 91B30.
Secondary: 62B10, 62D05.\\
\emph{Keywords:} contraction principle, Lagrange multipliers,
quantile, entropic risk measure, relative entropy.
\end{abstract}

\section{Introduction}
Quantitative risk management for financial and insurance companies
requires the modelling of financial positions in terms of random
variables on a suitable probability space; in mathematical terms,
this corresponds to identifying a probability law (model) $\mu$ on
the real line that describes as accurately as possible, the random
behaviour of the position. Model risk, that arises from the
uncertainty about the model to adopt, has been largely discussed
in various area of the literature, because it may impact
substantially companies decision making and performance. We can
distinguish three main approaches to deal with model uncertainty:
$1)$ the model is not specified but directly extrapolated from
data via the empirical distribution; $2)$ a model is selected and
its parameters are estimated from data (e.g. using Maximum
Likelihood Estimation); $3)$ a class of candidate models is
considered (for instance models suggested by expert opinion) and
then one or an average of them is applied. The latter approach is
probably the most common one and includes for instance: the
worst-case approach proposed by \cite{GS89} in the theory of
utility maximization, where the chosen model is the one providing
the most adverse outcome; the Bayesian model averaging approach,
developed by \cite{RMH97} where (posterior) weights are calculated
for each model considering both information arising from data and
prior beliefs; the highest posterior approach, where the selected
model is the one most favourable according to the posterior
weights. \cite{C00} provided a general framework for dealing with
model and parameter uncertainty in an insurance framework, while
\cite{PSZ09} considered a model averaging approach in a risk
management context. As we will see later, in this contribution we
consider an average of fixed models where the weights are
estimated empirically.

A second fundamental step for internal and external risk
management purposes is to quantify the riskiness of the company
positions. Once the model has been chosen, this essentially
corresponds to applying a suitable risk measure $\rho$ to the
financial position. The impact of model uncertainty on risk
measurement was discussed among others by \cite{BS15} and
\cite{BT16} where different measures of model risk are considered.

Most of the risk measures generally considered, from both
academics and practitioners, are \textit{law-invariant} that is,
univocally determined by the probability law of the random
variable. These risk measures can then be treated as statistical
functionals.
%This is generally done A key step in the financial and insurance risk management is the correct quantification of the risks undertaken. It is standard practice to measure the riskiness of a position using a risk measure $\rho$ that assigns to every financial risk, described by a suitable random variable $X$, a real number $\rho(X)$. Such a number summarises all the information about the riskiness of $X$. For regulation purposes, it is used for calculating the solvency capital requirement, that are the amount of capital that insurance and financial companies have to hold as a buffer against unexpected losses. In mathematical terms, specifying a model for a real-valued random variable $X$ means that its probability law $\mu:\R\to[0,1]$ has to be identified. This is the most challenging aspect of risk measurement.
%There is a wide literature that has considered the impact of a correct  model selection in a risk measurement framework, papers dealing with model uncertainty and model risk that is the risk arising from not knowing the correct model. In the financial and insurance risk management, most of the risk measures considered depend univocally on the probability law (model) of the random risk to assess.

While the mathematical theory of risk measures is by now well
developed, we refer for instance to \cite{FS16} for an extensive
treatment of coherent and convex risk measures, research on the
statistical properties of risk measures is fairly recent. The
seminal paper by \cite{CDS10} started a new strand in the
literature that investigates the statistical properties of risk
measures in terms of robustness with respect to available data and
to different model estimation procedures. The main difference
between the mathematical and the statistical approaches is that,
in the first case risk measures are defined on a space of random
variables, while in the second one, on a space of probability
measures. Although, under weak technical assumptions, for a random
variable $X$ with probability law $\mu$, we can identify $\rho(X)$
and $\rho(\mu)$, it is important to emphasise that properties of
risk measures on random variables and on distributions are
different. In particular, given two random variables $X$, $Y$ with
distributions $\mu$, $\nu$ the convex combination
$$p X+(1-p)Y,$$
for $p\in(0,1)$, corresponds to a diversification of the
portfolio, therefore it is reasonable to require a risk measure to
be convex with respect to random variables. On the other side, the
mixture distribution
$$p\mu+(1-p)\nu$$
represents a higher risk profile and thus a risk measure should
not be convex with respect to mixtures of distributions.
Properties of the risk measures with respect to mixture
distributions, have been investigated by \cite{AS13}. \cite{W06}
used such properties to characterise dynamic risk measures, while
\cite{Z16},~\cite{BB15} and \cite{DBBZ16} used them to study
elicitable functionals. \cite{BMP17} presented some results on
risk measures evaluated on mixtures of Gaussian and Student $t$
distributions.

%More often a set of candidate models are considered $(\mu_1,\ldots,\mu_s),$ then either one or an average  of them is considered. A mixture model is a convex combination of different probability laws $\pi_1\mu_1+\ldots,\pi_s\mu_s$, where $\pi_i\geq 0$ and $\sum_{i=1}^s\pi_i=1,$ represent the weights assigned to each model.
In this contribution we consider risk measures applied to the
mixture distribution
$$\pi_1\mu_1+\ldots+\pi_s\mu_s,$$
where $\{\mu_1,\ldots,\mu_s\}$ is a set of $s$ available models,
and $\pi_1,\ldots,\pi_s\geq 0$ (with $\sum_{j=1}^s\pi_j=1$) are
the weights assigned to each model. Mixture models are
particularly relevant when a single model is not sufficient to
fully describe the data. They represent a flexible approach for
modelling heterogeneous data and to carry out cluster analysis.
Further, mixture models represent a ductile way to model unknown
distributional shapes. Such situations are quite common in
insurance where often a mix of small, medium and large size claims
occurs; we refer the interested reader to \cite{KPW12} for a full
treatment of loss modelling in actuarial science. \cite{BMP12}
proposed finite mixtures of Skew Normal distributions to properly
characterise insurance data, while \cite{LL10} suggested a mixture
of Erlang distributions. In a statistical framework mixture models
have a variety of applications; we refer to \cite{MP04} for an
extensive treatment of the topic.

Throughout this paper the models $\mu_1,\ldots,\mu_s$ are assumed
to be fixed, and the weights $\pi_1,\ldots,\pi_s$ are estimated
empirically from independent samples. In an insurance framework,
we can assume that each model represents the loss profile of a
customer (or a class of customers) and the weights are estimated
registering the relative frequency of claims occurring for each
model. Then we consider the sequence of \emph{empirical} \emph{risk}
\emph{measures}  $\left\{\rho(\sum_{j=1}^s\hat{\pi}_n(j)\mu_j):n\geq
1\right\}$, where the weight estimators
$\hat{\pi}_n(1),\ldots,\hat{\pi}_n(s)$ concern the empirical law
of i.i.d. random variables $\{X_1,\ldots,X_n\}$ with distribution
$\pi=(\pi_1,\ldots,\pi_s)$ (see \eqref{eq:def-weight-estimators}
below).

In this paper we prove large deviation results for the empirical
risk measures. The theory of large deviations gives an asymptotic
computation of small probabilities on an exponential scale (see
e.g. \citealp{DemboZeitouni} as a reference on this topic). The
large deviation principles are obtained by applying the
contraction principle; so the rate functions are given by a
suitable variational formula. We use the method of the Lagrange
multipliers, and explicit expressions are available for $s=2$. We
then apply our results to the most common risk measures, namely
the quantiles (also known as Value-at-Risk in the risk management
literature), the Expected Shortfall (ES) and the shortfall risk
measure.

A different approach for large deviation analysis may be the use
of precise large deviation techniques which are beyond the purpose
of the paper; among others, a possible reference for the
interested reader is \cite{FMN}.

Our work was inspired by \cite{W07}, where the author considered
the empirical risk measures $\left\{\rho(\hat{\mu}_n):n\geq
1\right\}$, and
$$\hat{\mu}_n:=\frac{1}{n}\sum_{i=1}^n\delta_{Y_i}$$
is the empirical law of i.i.d.  random variables
$\{Y_1,\ldots,Y_n\}$ having (unknown) distribution $\mu$ with
bounded support. The main goal of that paper is to investigate
coherent and convex risk measures that are continuous on compacts.
This condition  yields the large deviation principle of
$\left\{\rho(\hat{\mu}_n):n\geq 1\right\}$ by applying the
contraction principle (see Proposition 2.1 and Corollary 2.1 in
\citealp{W07}). %Instead of the method of Lagrange
%multipliers,  explicit expressions of the rate function are obtained based on classic results of \cite{C75} and the bounded support hypothesis (which we do not employ), plays a key role in the proof.}
%We also remark that, in that paper, the author
%dealt with probability measures with bounded support; on the
%contrary this restriction is not required in this paper. In
%conclusion the results in this paper cannot be considered as
%applications of the results in \cite{W07}.

The paper is organised as follows. Section \ref{sec:preliminaries}
gathers some preliminaries on large deviations and their
applications to our framework with finite mixtures. In Section
\ref{sec:results} we present the main results of the paper, while
Section \ref{sec:examples} presents some examples for the most
common risk measures used in practice and in the literature.

\section{Preliminaries}\label{sec:preliminaries}
In this section we recall some preliminaries on large deviations
and a large deviation principle for a sequence of estimators (see
Proposition \ref{prop:analogue-of-Prop.2.1-Weber}).

\subsection{Preliminaries on large deviations}
A sequence of random variables $\{W_n:n\geq 1\}$ taking values on
a topological space $\mathcal{W}$ satisfies the large deviation
principle (LDP for short) with rate function
$I:\mathcal{W}\to[0,\infty]$ if $I$ is a lower semi-continuous
function,
$$\liminf_{n\to\infty}\frac{1}{n}\log P(W_n\in O)\geq-\inf_{w\in O}I(w)\ \mbox{for all open sets}\ O$$
and
$$\limsup_{n\to\infty}\frac{1}{n}\log P(W_n\in C)\leq-\inf_{w\in C}I(w)\ \mbox{for all closed sets}\ C.$$
A rate function $I$ is said to be good if all its level sets
\mbox{$\{\{w\in\mathcal{W}:I(w)\leq\eta\}:\eta\geq 0\}$} are
compact. Finally we also recall the contraction principle (see
e.g.~Theorem 4.2.1 in \citealp{DemboZeitouni}): let $\mathcal{Y}$ be
a topological space, and let $f:\mathcal{W}\to\mathcal{Y}$ be a
continuous function; then, if $\{W_n:n\geq 1\}$ satisfies the LDP
with good rate function $I$, and $Y_n:=f(W_n)$ (for all $n\geq
1$), $\{Y_n:n\geq 1\}$ satisfies the LDP with good rate function
$J$ defined by
$$J(y):=\inf\{I(w):w\in\mathcal{W},f(w)=y\}.$$

The LDP for real valued random variables is used to obtain
asymptotic evaluations for the logarithm of tail probabilities;
indeed, for a wide class of cases, we have
$$\log P(W_n>x)\sim {-nI(x)}$$
at least for $x$ large enough to have $I(x)=\inf_{w>x}I(w)$ (we
use the symbol $\sim$ to mean that the ratio tends to 1 as
$n\to\infty$).

\subsection{LDP for estimators of $\rho(\mu)$ when $\mu$ is a mixture}
We define a law-invariant risk measure as a map
$$\rho:\mathcal{P}(\mathbb{R})\to \mathbb{R},$$
that assigns to every probability measure
$\mu\in\mathcal{P}(\mathbb{R})$ on the real line a real number
$\rho(\mu)$. Such a value, is generally used to summarise the
riskiness of the model $\mu$ and can be adopted to calculate
solvency capital requirements. In the present contribution, we
focus on probability distributions that arise as mixtures
$\pi_1\mu_1+\ldots+\pi_s\mu_s$ of some fixed models
$\mu_1,\ldots,\mu_s$  with weights
$\pi=(\pi_1,\ldots,\pi_s)\in\Sigma_s$ where
$$\Sigma_s:=\{(p_1,\ldots,p_s):p_1,\ldots,p_s\geq 0,\ p_1+\cdots+p_s=1\}$$
is the simplex; we are then interested in  computing
$\rho\left(\sum_{j=1}^s\pi_j\mu_j\right)$. In mixture models used
for modeling insurance data, it is often the case that the weights
$\pi_1,\ldots,\pi_s$ are unknown   and estimated from a set of $n$
available data by
$\hat{\pi}_n=(\hat{\pi}_n(1),\ldots,\hat{\pi}_n(s))$, see for
instance \cite{LLW12}.  In order to estimate the error committed
in computing the estimated risk measure
$\rho\left(\sum_{j=1}^s\hat{\pi}_n(j)\mu_j\right)$ instead of the
correct one $\rho\left(\sum_{j=1}^s{\pi}(j)\mu_j\right)$, we
employ the theory of large deviations. In particular, we consider
the case where  the weights are estimated empirically as
\begin{equation}\label{eq:def-weight-estimators}
\hat{\pi}_n(j):=\frac{1}{n}\sum_{i=1}^n\delta_{X_i=j}\ (\mbox{for
all}\ j\in\{1,\ldots,s\}),
\end{equation}
where $\{X_1,\ldots,X_n\}$ are i.i.d. random variables with
distribution $\pi=(\pi_1,\ldots,\pi_s)$. It is well known that
the  sequence of empirical measures $(\hat{\pi}_n)_n$  converges
$P$-a.s. to $\pi$, and that it satisfies the LDP (see e.g. Theorem
2.1.10 in \citealp{DemboZeitouni}). Therefore, by applying the
contraction principle (see Theorem 4.2.1 in
\citealp{DemboZeitouni}), we obtain the LDP stated in the
following proposition.

\begin{proposition}\label{prop:analogue-of-Prop.2.1-Weber}
Let $\pi=(\pi_1,\ldots,\pi_s)\in\Sigma_s$. Moreover assume
that the function
\begin{equation}\label{eq:function-of-weights}
\Sigma_s\ni(p_1,\ldots,p_s)\mapsto\rho\left(\sum_{j=1}^sp_j\mu_j\right)
\end{equation}
is continuous. Then
$\left\{\rho\left(\sum_{j=1}^s\hat{\pi}_n(j)\mu_j\right):n\geq
1\right\}$ satisfies the LDP (as $n\to\infty$) with good rate
function $H_{\rho,\langle\pi,\mu\rangle}$ defined by
\begin{equation}\label{eq:rf-variational-formula}
H_{\rho,\langle\pi,\mu\rangle}(r):=\inf\left\{\sum_{j=1}^sp_j\log\frac{p_j}{\pi_j}:(p_1,\ldots,p_s)\in\Sigma_s,\rho\left(\sum_{j=1}^sp_j\mu_j\right)=r\right\}.
\end{equation}
\end{proposition}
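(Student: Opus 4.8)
The plan is to obtain the statement as a direct concatenation of two cited results: the large deviation principle for the empirical weights $\hat\pi_n$ (Theorem 2.1.10 in \citealp{DemboZeitouni}) and the contraction principle (Theorem 4.2.1 in \citealp{DemboZeitouni}). The continuity hypothesis on \eqref{eq:function-of-weights} is exactly the ingredient needed to glue the two together.

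First I would observe that the random variables $X_1,\ldots,X_n$ take values in the finite alphabet $\{1,\ldots,s\}$ with law $\pi$, so the vector of empirical frequencies $\hat\pi_n=(\hat\pi_n(1),\ldots,\hat\pi_n(s))$ defined in \eqref{eq:def-weight-estimators} takes values in the simplex $\Sigma_s$. By Sanov's theorem for a finite alphabet (Theorem 2.1.10 in \citealp{DemboZeitouni}), the sequence $\{\hat\pi_n:n\geq 1\}$ satisfies the LDP on $\Sigma_s$ with rate function given by the relative entropy
$$I(p_1,\ldots,p_s):=\sum_{j=1}^s p_j\log\frac{p_j}{\pi_j},$$
with the usual conventions $0\log 0=0$ and $p_j\log(p_j/\pi_j)=+\infty$ when $p_j>0=\pi_j$. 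Since $\Sigma_s$ is a compact subset of $\mathbb{R}^s$ and $I$ is lower semi-continuous, each level set of $I$ is a closed subset of $\Sigma_s$, hence compact; this shows that $I$ is a good rate function, which is the form required by the contraction principle.

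Next I would introduce the map $f:\Sigma_s\to\mathbb{R}$ defined by $f(p_1,\ldots,p_s):=\rho\left(\sum_{j=1}^s p_j\mu_j\right)$, which is continuous precisely by the assumption on \eqref{eq:function-of-weights}. By construction one has $f(\hat\pi_n)=\rho\left(\sum_{j=1}^s\hat\pi_n(j)\mu_j\right)$, so the empirical risk measures are the images of the empirical weights under $f$. Applying the contraction principle with $\mathcal{W}=\Sigma_s$, $\mathcal{Y}=\mathbb{R}$, good rate function $I$, and continuous map $f$, I conclude that $\{f(\hat\pi_n):n\geq 1\}$ satisfies the LDP with good rate function
$$J(r)=\inf\{I(p_1,\ldots,p_s):(p_1,\ldots,p_s)\in\Sigma_s,\ f(p_1,\ldots,p_s)=r\},$$
which coincides exactly with the variational formula \eqref{eq:rf-variational-formula} defining $H_{\rho,\langle\pi,\mu\rangle}$.

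Because the argument is a transparent composition of two standard theorems, I do not expect a genuine obstacle. The only points requiring care are verifying that the empirical frequencies indeed live in the compact simplex $\Sigma_s$, so that the relative-entropy rate function is automatically good without any further moment or tail conditions, and confirming that the stated continuity of \eqref{eq:function-of-weights} is exactly the hypothesis the contraction principle demands. Once these are noted, the identification of $J$ with $H_{\rho,\langle\pi,\mu\rangle}$ is immediate and the proof is complete.
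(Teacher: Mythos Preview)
Your proposal is correct and follows exactly the approach the paper indicates: the paper does not give a formal proof of this proposition but simply states, just before it, that the result follows from the LDP for the empirical measures (Theorem 2.1.10 in \citealp{DemboZeitouni}) combined with the contraction principle (Theorem 4.2.1 in \citealp{DemboZeitouni}). Your write-up supplies the details of that argument faithfully, including the goodness of the relative-entropy rate function via compactness of $\Sigma_s$ and the identification of the contracted rate function with \eqref{eq:rf-variational-formula}.
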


\begin{remark}[Relative entropy and Sanov's Theorem]\label{rem:relative-entropy}
The quantity
$\mathbb{E}_{\pi}[\frac{dp}{d\pi}\log\frac{dp}{d\pi}]=\sum_{j=1}^sp_j\log\frac{p_j}{\pi_j}$
in \eqref{eq:rf-variational-formula} is the relative entropy of a
general probability measure $p=(p_1,\ldots,p_s)$ on the state
space $\{1,\ldots,s\}$, with respect to the probability measure
$\pi=(\pi_1,\ldots,\pi_s)$ that gives the actual (but unknown)
weights of the mixture model. Large deviation rate functions are
indeed often expressed in terms of relative entropy; see e.g. the
discussion in \cite{Varadhan}. The rate function is thus obtained
minimising the relative entropy under a constraint on the risk
measure.
\end{remark}

\begin{remark}[The set $S_\pi$ and the value $r_0$]\label{rem:initial}
If $\pi_j=0$ for some $j\in\{1,\ldots,s\}$, then we have
$$p_j\log\frac{p_j}{\pi_j}=\left\{\begin{array}{ll}
0&\ \mbox{if}\ p_j=0\\
\infty&\ \mbox{if}\ p_j\in(0,1];
\end{array}\right.$$
so, in some sense, the index $j$ is negligible. Then we should
consider the set $S_\pi:=\{i\in\{1,\ldots,s\}:\pi_i>0\}$ instead
of $\{1,\ldots,s\}$; however, with a slight abuse of notation,
throughout the paper we always refer to $\{1,\ldots,s\}$ (and its
cardinality $s$) because we can always rearrange the notation in
order to have $S_\pi=\{1,\ldots,s\}$. We also remark that
$H_{\rho,\langle\pi,\mu\rangle}(r)$ uniquely vanishes at $r=r_0$,
where
$$r_0:=\rho\left(\sum_{j=1}^s\pi_j\mu_j\right).$$
Thus we can say that, for every $\delta>0$, under the hypotheses
of Proposition \ref{prop:analogue-of-Prop.2.1-Weber}, the
probability
$$P\left(\left|\rho\left(\sum_{j=1}^s\hat{\pi}_n(j)\mu_j\right)-r_0\right|\geq\delta\right)$$
decays as $e^{-nh_\delta}$, where
$h_\delta:=\inf\left\{H_{\rho,\langle\pi,\mu\rangle}(r):
|r-r_0|\geq\delta\right\}>0$, as $n\to\infty$.
\end{remark}

\section{Results}\label{sec:results}
In this section  we  provide, when possible,
 an explicit expression of the variational formula in
\eqref{eq:rf-variational-formula}. Note  that in general the constraint $\rho\left(\sum_{j=1}^sp_j\mu_j\right)=r$ cannot be written explicitly in terms of the $p_j$'s; for this reason
we introduce the next Condition
\ref{cond:equivalent-formulation-on-constraint} that requires a sort of  linear dependence of the risk measures with respect to  the mixture weights. As we shall
see in Theorem \ref{prop:main}, this allows us to handle the variational formula in
\eqref{eq:rf-variational-formula}   with
 the method of Lagrange multipliers. Condition
\ref{cond:equivalent-formulation-on-constraint} does not seem to
be restrictive, it is indeed satisfied by many of the risk
measures used by academics and  practitioners.

\begin{condition}\label{cond:equivalent-formulation-on-constraint}
The function in \eqref{eq:rf-variational-formula} can be written
as
\begin{equation}\label{eq:variational-formula-special-constraint}
H_{\rho,\langle\pi,\mu\rangle}(r):=\inf\left\{\sum_{j=1}^sp_j\log\frac{p_j}{\pi_j}:(p_1,\ldots,p_s)\in\Sigma_s,\sum_{j=1}^sp_j\Psi_\rho(\mu_j,r)=0\right\},
\end{equation}
for some (strictly) decreasing functions
$\Psi_\rho(\mu_1,\cdot),\ldots,\Psi_\rho(\mu_s,\cdot)$. Moreover,
for all $i\in\{1,\ldots,s\}$, there exists a unique $r_i^{(0)}$
such that $\Psi_\rho\left(\mu_i,r_i^{(0)}\right)=0$.
\end{condition}

Obviously we could require that
$\Psi_\rho(\mu_1,\cdot),\ldots,\Psi_\rho(\mu_s,\cdot)$ are
increasing functions (instead of decreasing); in such a case we
can reduce to Condition
\ref{cond:equivalent-formulation-on-constraint} (namely the
functions $\Psi_\rho(\mu_1,\cdot),\ldots,\Psi_\rho(\mu_s,\cdot)$
are decreasing) by a change of sign.

We will see in Section \ref{sec:examples} that Condition
\ref{cond:equivalent-formulation-on-constraint} is fulfilled by
some popular risk measures, such as the quantiles, the mean and
the class of convex shortfall risk measures introduced by
\cite{FS02}. The Expected Shortfall satisfies this condition only
under some extra requirements. A similar condition recently
appeared in the literature about elicitable risk measures under
the name of Convex Level Sets (CxLS). A risk measure has CxLS if,
given $\rho(\mu_1)=\cdots=\rho(\mu_s)=r$, then
$\rho(\sum_{j=1}^sp_j\mu_j)=r$. Clearly, if
$\rho(\mu_1)=\cdots=\rho(\mu_s)=r$, the CxLS property implies our
Condition \ref{cond:equivalent-formulation-on-constraint} with
$$\Psi_\rho(\mu_j,r):=\rho(\mu_j)-r.$$
A full characterization of convex risk measures satisfying the
CxLS property is provided in \cite{DBBZ16}.

\begin{remark}[Consequences of Condition \ref{cond:equivalent-formulation-on-constraint} for $r_0$]\label{rem:consequence-condition}
If Condition \ref{cond:equivalent-formulation-on-constraint}
holds, then we have
\begin{equation}\label{eq:cond-on-pi-and-r0}
\sum_{j=1}^s\pi_j\Psi_\rho(\mu_j,r_0)=0.
\end{equation}
Moreover, if we set
$\underline{r}_{\rho}:=\min\left\{r_i^{(0)}:i\in\{1,\ldots,s\}\right\}$
and
$\overline{r}_{\rho}:=\max\left\{r_i^{(0)}:i\in\{1,\ldots,s\}\right\}$,
we have $\underline{r}_{\rho}\leq\overline{r}_{\rho}$. Then we
can distinguish two cases (see parts (i) and (ii) in the next Theorem
\ref{prop:main}):
\begin{itemize}
\item $\underline{r}_{\rho}=\overline{r}_{\rho}=:\hat{r}_\rho$,
which occurs if and only if
$r_1^{(0)}=\cdots=r_s^{(0)}=\hat{r}_\rho$; in this case we have
$r_0=\hat{r}_\rho$, and the estimators
$\left\{\rho\left(\sum_{j=1}^s\hat{\pi}_n(j)\mu_j\right);n\geq
1\right\}$ are constantly equal to $r_0$;
\item $\underline{r}_{\rho}<\overline{r}_{\rho}$; in this case
we have $r_0\in(\underline{r}_{\rho},\overline{r}_{\rho})$, and
the estimators
$\left\{\rho\left(\sum_{j=1}^s\hat{\pi}_n(j)\mu_j\right);n\geq
1\right\}$ take values in
$[\underline{r}_{\rho},\overline{r}_{\rho}]$.
\end{itemize}
The first case always occurs if $s=1$.
\end{remark}

Now we are ready to present Theorem \ref{prop:main}. In general we
only have an explicit expression of
$H_{\rho,\langle\pi,\mu\rangle}$ for the case $s=2$; see Remark
\ref{rem:not-explicit} and Remark \ref{rem:case-s=2}. The case
with $s=\infty$ will be discussed in Remark
\ref{rem:case-s=infty}.

\begin{theorem}\label{prop:main}
Consider the same hypotheses of Proposition
\ref{prop:analogue-of-Prop.2.1-Weber}. Assume that Condition
\ref{cond:equivalent-formulation-on-constraint} holds, and let
$\underline{r}_{\rho}$ and $\overline{r}_{\rho}$ be as in Remark
\ref{rem:consequence-condition}.\\
(i) If $\underline{r}_{\rho}=\overline{r}_{\rho}$, then
$$H_{\rho,\langle\pi,\mu\rangle}(r)=\left\{\begin{array}{ll}
0&\ \mbox{if}\ r=r_1^{(0)}=\cdots=r_s^{(0)}\\
\infty&\ \mbox{otherwise}.
\end{array}\right.$$
(ii) If $\underline{r}_{\rho}<\overline{r}_{\rho}$, then
$$H_{\rho,\langle\pi,\mu\rangle}(r)=\left\{\begin{array}{ll}
-\log\left(\sum_{j=1}^s\pi_je^{-\lambda_*(r)\Psi_\rho(\mu_j,r)}\right)&\ \mbox{if}\ r\in(\underline{r}_{\rho},\overline{r}_{\rho})\\
-\log\sum_{j:r_j^{(0)}=\overline{r}_{\rho}}\pi_j&\ \mbox{if}\ r=\overline{r}_{\rho}\\
-\log\sum_{j:r_j^{(0)}=\underline{r}_{\rho}}\pi_j&\ \mbox{if}\ r=\underline{r}_{\rho}\\
\infty&\ \mbox{if}\
r\notin[\underline{r}_{\rho},\overline{r}_{\rho}],
\end{array}\right.$$
where $\lambda_*(r)$ is such that
\begin{equation}\label{eq:constraint-lambda}
\frac{\sum_{j=1}^s\pi_j\Psi_\rho(\mu_j,r)e^{-\lambda_*(r)\Psi_\rho(\mu_j,r)}}{\sum_{j=1}^s\pi_je^{-\lambda_*(r)\Psi_\rho(\mu_j,r)}}=0.
\end{equation}
\end{theorem}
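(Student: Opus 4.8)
The plan is to fix $r$, abbreviate $c_j:=\Psi_\rho(\mu_j,r)$, and read \eqref{eq:variational-formula-special-constraint} as the minimisation of the strictly convex relative entropy $\sum_{j=1}^s p_j\log(p_j/\pi_j)$ over $p\in\Sigma_s$ subject to the single linear constraint $\sum_{j=1}^s p_j c_j=0$. First I would record the sign of $c_j$: since $\Psi_\rho(\mu_j,\cdot)$ is strictly decreasing with unique zero at $r_j^{(0)}$, one has $c_j>0$, $c_j=0$, $c_j<0$ according as $r<r_j^{(0)}$, $r=r_j^{(0)}$, $r>r_j^{(0)}$. As $p$ ranges over $\Sigma_s$ the attainable values of $\sum_j p_j c_j$ fill the interval $[\min_j c_j,\max_j c_j]$, so the feasible set is nonempty if and only if $\min_j c_j\le 0\le\max_j c_j$, which by the sign analysis is exactly $\underline{r}_{\rho}\le r\le\overline{r}_{\rho}$. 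Hence $H_{\rho,\langle\pi,\mu\rangle}(r)=\infty$ for $r\notin[\underline{r}_{\rho},\overline{r}_{\rho}]$. In case (i) the interval degenerates to the single point $\hat r_\rho$: there all $c_j=0$, the constraint is vacuous, and the relative entropy is minimised by $p=\pi$ with value $0$, while elsewhere the feasible set is empty; this gives part (i).

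For the interior $r\in(\underline{r}_{\rho},\overline{r}_{\rho})$ I would use Lagrange multipliers. The feasible set is a closed subset of the simplex, hence compact, the objective is continuous, and the problem is convex with linear constraints, so a minimiser exists and any stationary point is the global minimum. Forming the Lagrangian with a multiplier for $\sum_j p_j=1$ and one for $\sum_j p_j c_j=0$ and differentiating in $p_j$ yields $p_j\propto\pi_j e^{\lambda c_j}$; writing $\lambda_\ast:=-\lambda$ gives the exponentially tilted candidate $p_j^\ast=\pi_j e^{-\lambda_\ast c_j}/\sum_k\pi_k e^{-\lambda_\ast c_k}$. Because every $\pi_j>0$ after the reduction to $S_\pi$ of Remark~\ref{rem:initial}, all $p_j^\ast>0$, so the nonnegativity constraints are inactive and the stationarity equations are exactly those solved. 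Imposing $\sum_j p_j^\ast c_j=0$ reproduces precisely \eqref{eq:constraint-lambda} as the defining equation for $\lambda_\ast=\lambda_\ast(r)$, and substituting $p^\ast$ back, using $\log(p_j^\ast/\pi_j)=-\lambda_\ast c_j-\log\sum_k\pi_k e^{-\lambda_\ast c_k}$ together with $\sum_j p_j^\ast c_j=0$ and $\sum_j p_j^\ast=1$, collapses the objective to $-\log\sum_{j=1}^s\pi_j e^{-\lambda_\ast c_j}$, the claimed value.

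The hard part will be the existence and uniqueness of $\lambda_\ast$ solving \eqref{eq:constraint-lambda}. Here I would set $\Lambda(\lambda):=\log\sum_j\pi_j e^{-\lambda c_j}$ and note that the left-hand side of \eqref{eq:constraint-lambda} equals $-\Lambda'(\lambda)$, i.e.\ the mean of $c_j$ under the tilted law. As a cumulant generating function $\Lambda$ is convex, and strictly so here because the $c_j$ are not all equal (they straddle $0$ in the interior), so $-\Lambda'$ is strictly decreasing in $\lambda$. Its limits as $\lambda\to\mp\infty$ are $\max_j c_j>0$ and $\min_j c_j<0$, where these strict inequalities are precisely what $r\in(\underline{r}_{\rho},\overline{r}_{\rho})$ guarantees; by continuity there is then a unique root $\lambda_\ast(r)$. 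This simultaneously defines $\lambda_\ast$ and certifies feasibility of $p^\ast$.

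Finally, for the endpoints I would argue by degeneration of the constraint. Take $r=\overline{r}_{\rho}$: the sign analysis gives $c_j\le 0$ for all $j$, with equality exactly on $J:=\{j:r_j^{(0)}=\overline{r}_{\rho}\}$, so $\sum_j p_j c_j=0$ forces $p_j=0$ for $j\notin J$, i.e.\ $p$ is supported on $J$. Setting $\Pi_J:=\sum_{j\in J}\pi_j$ and factoring $\pi_j=\Pi_J(\pi_j/\Pi_J)$, the objective becomes $\sum_{j\in J}p_j\log\frac{p_j}{\pi_j/\Pi_J}-\log\Pi_J$, a nonnegative relative entropy minus $\log\Pi_J$, hence minimised at $p_j=\pi_j/\Pi_J$ with value $-\log\Pi_J=-\log\sum_{j:r_j^{(0)}=\overline{r}_{\rho}}\pi_j$. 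The case $r=\underline{r}_{\rho}$ is symmetric, which completes part (ii).
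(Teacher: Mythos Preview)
Your proof is correct and follows essentially the same Lagrange-multiplier approach as the paper: sign analysis of the $\Psi_\rho(\mu_j,r)$ to determine feasibility, exponential tilting for the interior case, and direct identification of the minimiser at the endpoints. You are in fact more careful than the paper, which neither proves existence and uniqueness of $\lambda_\ast(r)$ nor explicitly invokes convexity to certify that the stationary point is the global minimum; your cumulant-generating-function argument for $\lambda_\ast$ and your relative-entropy decomposition at the boundary fill those gaps.
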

\begin{proof}
We start with the proof of the statement (i). For
$r=r_1^{(0)}=\cdots=r_s^{(0)}$ we have
$$H_{\rho,\langle\pi,\mu\rangle}(r)=\inf\left\{\sum_{j=1}^sp_j\log\frac{p_j}{\pi_j}:(p_1,\ldots,p_s)\in\Sigma_s\right\}=0$$
(the infimum is attained by choosing
$(p_1,\ldots,p_s)=(\pi_1,\ldots,\pi_s)$); on the contrary, for
$r\neq r_1^{(0)}=\cdots=r_s^{(0)}$, we have
$H_{\rho,\langle\pi,\mu\rangle}(r)=\infty$ because the condition
$\sum_{j=1}^sp_j\Psi_\rho(\mu_j,r)=0$ fails for every choice of
$(p_1,\ldots,p_s)\in\Sigma_s$ (in fact the values
$\{\Psi_\rho(\mu_j,r):j\in\{1,\ldots,s\}\}$ are all positive if
$r<r_1^{(0)}=\cdots=r_s^{(0)}$ and are all negative if
$r>r_1^{(0)}=\cdots=r_s^{(0)}$), and therefore we have the infimum
over the empty set.

Now we concentrate the attention on the proof of the statement
(ii). For $r\notin[\underline{r}_{\rho},\overline{r}_{\rho}]$ we
have the same argument of the proof of the statement (i), for the
case $r\neq r_1^{(0)}=\cdots=r_s^{(0)}$. For
$r\in(\underline{r}_{\rho},\overline{r}_{\rho})$ we introduce the
function
$$\mathcal{L}(p_1,\ldots,p_s,\lambda)=\sum_{j=1}^sp_j\log\frac{p_j}{\pi_j}+\lambda\left(\sum_{j=1}^sp_j\Psi_\rho(\mu_j,r)\right)$$
and, by the Lagrange multipliers method, $(p_1,\ldots,p_s)$
attains the infimum in
\eqref{eq:variational-formula-special-constraint} if it is the
solution of the system
$$\left\{\begin{array}{ll}
\log\frac{p_i}{\pi_i}+1+\lambda\Psi_\rho(\mu_i,r)=0,\qquad\ \mbox{for all}\ i\in\{1,\ldots,s\}\\
\sum_{j=1}^sp_j\Psi_\rho(\mu_j,r)=0;
\end{array}\right.$$
then the minimiser $(p_1,\ldots,p_s)=(p_1(r),\ldots,p_s(r))$,
which attains the infimum in
\eqref{eq:variational-formula-special-constraint}, is defined by
\begin{equation}\label{eq:def-minimiser-special-constraint}
p_i(r):=\frac{\pi_ie^{-\lambda_*(r)\Psi_\rho(\mu_i,r)}}{\sum_{j=1}^s\pi_je^{-\lambda_*(r)\Psi_\rho(\mu_j,r)}}\
\mbox{for all}\ i\in\{1,\ldots,s\},
\end{equation}
where $\lambda_*(r)$ is such that \eqref{eq:constraint-lambda}
holds. In conclusion, for
$r\in(\underline{r}_{\rho},\overline{r}_{\rho})$, we have
\begin{multline*}
H_{\rho,\langle\pi,\mu\rangle}(r)=\sum_{j=1}^sp_j(r)\log\frac{p_j(r)}{\pi_j}
=\sum_{j=1}^s\frac{\pi_je^{-\lambda_*(r)\Psi_\rho(\mu_j,r)}}{\sum_{h=1}^s\pi_he^{-\lambda_*(r)\Psi_\rho(\mu_h,r)}}
\log\frac{e^{-\lambda_*(r)\Psi_\rho(\mu_j,r)}}{\sum_{h=1}^s\pi_he^{-\lambda_*(r)\Psi_\rho(\mu_h,r)}}\\
=-\lambda_*(r)\underbrace{\sum_{j=1}^s\frac{\pi_j\Psi_\rho(\mu_j,r)e^{-\lambda_*(r)\Psi_\rho(\mu_j,r)}}
{\sum_{h=1}^s\pi_he^{-\lambda_*(r)\Psi_\rho(\mu_h,r)}}}_{=0\
\mathrm{by}\ \eqref{eq:constraint-lambda}}
-\log\left(\sum_{h=1}^s\pi_he^{-\lambda_*(r)\Psi_\rho(\mu_h,r)}\right),
\end{multline*}
and therefore
\begin{equation}\label{eq:rf-expression-special-constraint}
H_{\rho,\langle\pi,\mu\rangle}(r)=-\log\left(\sum_{h=1}^s\pi_he^{-\lambda_*(r)\Psi_\rho(\mu_h,r)}\right).
\end{equation}
Finally the cases
$r\in\{\underline{r}_{\rho}, \overline{r}_{\rho}\}$. The minimisers
$(p_1,\ldots,p_s)=(p_1(r),\ldots,p_s(r))$, which attain the
infimum in \eqref{eq:variational-formula-special-constraint}, are
$$p_i(\overline{r}_{\rho})=\left\{\begin{array}{ll}
\pi_i/(\sum_{j:r_j^{(0)}=\overline{r}_{\rho}}\pi_j)&\ \mbox{if}\ i\in\{j:r_j^{(0)}=\overline{r}_{\rho}\}\\
0&\ \mbox{if}\ i\notin\{j:r_j^{(0)}=\overline{r}_{\rho}\}
\end{array}\right.$$
and
$$p_i(\underline{r}_{\rho})=\left\{\begin{array}{ll}
\pi_i/(\sum_{j:r_j^{(0)}=\underline{r}_{\rho}}\pi_j)&\ \mbox{if}\ i\in\{j:r_j^{(0)}=\underline{r}_{\rho}\}\\
0&\ \mbox{if}\ i\notin\{j:r_j^{(0)}=\underline{r}_{\rho}\},
\end{array}\right.$$
respectively; thus we can easily check that
$$H_{\rho,\langle\pi,\mu\rangle}(r)=\sum_{j=1}^sp_j(r)\log\frac{p_j(r)}{\pi_j}=\left\{\begin{array}{ll}
-\log\sum_{j:r_j^{(0)}=\overline{r}_{\rho}}\pi_j&\ \mbox{if}\ r=\overline{r}_{\rho}\\
-\log\sum_{j:r_j^{(0)}=\underline{r}_{\rho}}\pi_j&\ \mbox{if}\
r=\underline{r}_{\rho}.
\end{array}\right.$$
\end{proof}

\begin{remark}[The rate function $H_{\rho,\langle\pi,\mu\rangle}$ is not explicit]\label{rem:not-explicit}
Obviously Theorem \ref{prop:main} does not provide an explicit
expression of the rate function because there is not an explicit
expression of $\lambda_*(r)$. However
$H_{\rho,\langle\pi,\mu\rangle}(r_0)=0$; so that
$(p_1(r_0),\ldots,p_s(r_0))=(\pi_1,\ldots,\pi_s)$ and, by taking
into account \eqref{eq:def-minimiser-special-constraint}, we obtain
$\lambda_*(r_0)=0$. Therefore we recover the known equality
$H_{\rho,\langle\pi,\mu\rangle}(r_0)=0$ by considering
\eqref{eq:rf-expression-special-constraint} and
$\lambda_*(r_0)=0$. Finally we also remark that
\eqref{eq:constraint-lambda} and $\lambda_*(r_0)=0$ yield
\eqref{eq:cond-on-pi-and-r0}.
\end{remark}

\begin{remark}[On Theorem \ref{prop:main}(ii) with $s=2$]\label{rem:case-s=2}
We remark that, if $s=2$, then
$\underline{r}_{\rho}<\overline{r}_{\rho}$ if and only if
$r_1^{(0)}\neq r_2^{(0)}$. Here we take
$r\in(\underline{r}_{\rho},\overline{r}_{\rho})=\left(r_1^{(0)}\wedge
r_2^{(0)},r_1^{(0)}\vee r_2^{(0)}\right)$. Then we have
$$\pi_1\Psi_\rho(\mu_1,r)e^{-\lambda_*(r)\Psi_\rho(\mu_1,r)}+\pi_2\Psi_\rho(\mu_2,r)e^{-\lambda_*(r)\Psi_\rho(\mu_2,r)}=0$$
by \eqref{eq:constraint-lambda} (with $s=2$), and therefore
$$e^{-\lambda_*(r)\Psi_\rho(\mu_1,r)}\left(\pi_1\Psi_\rho(\mu_1,r)+\pi_2\Psi_\rho(\mu_2,r)e^{-\lambda_*(r)(\Psi_\rho(\mu_2,r)-\Psi_\rho(\mu_1,r))}\right)=0;$$
this yields
$$\pi_1\Psi_\rho(\mu_1,r)+\pi_2\Psi_\rho(\mu_2,r)e^{-\lambda_*(r)(\Psi_\rho(\mu_2,r)-\Psi_\rho(\mu_1,r))}=0,$$
and we get
$$e^{-\lambda_*(r)(\Psi_\rho(\mu_2,r)-\Psi_\rho(\mu_1,r))}=-\frac{\pi_1\Psi_\rho(\mu_1,r)}{\pi_2\Psi_\rho(\mu_2,r)};$$
thus
\begin{equation}\label{eq:lambda*-s=2}
\lambda_*(r)=-\frac{1}{\Psi_\rho(\mu_2,r)-\Psi_\rho(\mu_1,r)}\log\left(-\frac{\pi_1\Psi_\rho(\mu_1,r)}{\pi_2\Psi_\rho(\mu_2,r)}\right).
\end{equation}
We remark that, for
$r\in(\underline{r}_{\rho},\overline{r}_{\rho})$, we have
$$\Psi_\rho(\mu_2,r),\Psi_\rho(\mu_1,r)\neq 0\ \mbox{and}\ \Psi_\rho(\mu_2,r)\Psi_\rho(\mu_1,r)<0;$$
thus, in particular, $\Psi_\rho(\mu_2,r)-\Psi_\rho(\mu_1,r)\neq
0$. Finally, by \eqref{eq:rf-expression-special-constraint} (with
$s=2$) and \eqref{eq:lambda*-s=2}, we get
$$H_{\rho,\langle\pi,\mu\rangle}(r)=
-\log\left(\sum_{h=1}^2\pi_h\left(-\frac{\pi_1\Psi_\rho(\mu_1,r)}{\pi_2\Psi_\rho(\mu_2,r)}\right)
^{\frac{\Psi_\rho(\mu_h,r)}{\Psi_\rho(\mu_2,r)-\Psi_\rho(\mu_1,r)}}\right).$$
\end{remark}

%\begin{remark}[On Theorem \ref{prop:main} with $s=\infty$]\label{rem:case-s=infty}
%It would be possible to present a version of Theorem
%\ref{prop:main} with $s=\infty$, namely for the case of countable
%mixture models. Obviously some other parts have to be accordingly
%changed (Condition \ref{cond:equivalent-formulation-on-constraint}
%and Remark \ref{rem:consequence-condition}). In particular we
%should require that the quantities
%$\underline{r}_{\rho}:=\min\left\{r_i^{(0)}:i\geq 1\right\}$ and
%$\overline{r}_{\rho}:=\max\left\{r_i^{(0)}:i\geq 1\right\}$ are
%well-defined (this is not guaranteed as happens for the case
%$s<\infty$).
%\end{remark}
\begin{remark}[On Theorem \ref{prop:main} with $s=\infty$]\label{rem:case-s=infty}
It is possible to present a version of Theorem \ref{prop:main}
with $s=\infty$, namely for the case of countable mixture models.
The statement and the proof can be easily adapted and we omit the
details. However we remark that Condition
\ref{cond:equivalent-formulation-on-constraint} and Remark
\ref{rem:consequence-condition} have to be suitably changed; in
fact we should require that the quantities
$\underline{r}_{\rho}:=\min\left\{r_i^{(0)}:i\geq 1\right\}$ and
$\overline{r}_{\rho}:=\max\left\{r_i^{(0)}:i\geq 1\right\}$ are
well-defined (this is not guaranteed as happens for the case
$s<\infty$).
\end{remark}

Finally we are also interested in the local comparison between
rate functions around the points where they uniquely vanish; in
fact the larger $H_{\rho,\langle\pi,\mu\rangle}$ is around $r_0$
(except $r_0$), the faster is the convergence of estimators to
$r_0$ (as $n\to\infty$). We also recall that, under suitable
hypotheses which guarantee the existence of the second derivative
of $H_{\rho,\langle\pi,\mu\rangle}(r)$ computed at $r=r_0$, the
more $H_{\rho,\langle\pi,\mu\rangle}^{\prime\prime}(r_0)$ is
large, the more $H_{\rho,\langle\pi,\mu\rangle}$ is large around
$r_0$ (except $r_0$). An expression for
$H_{\rho,\langle\pi,\mu\rangle}^{\prime\prime}(r_0)$ is given in
the next proposition.

\begin{proposition}\label{prop:second-derivative}
Consider the same hypotheses and notation of Theorem
\ref{prop:main}(ii). Moreover assume that
$\Psi_\rho(\mu_1,\cdot),\ldots,\Psi_\rho(\mu_s,\cdot)$ are
continuously differentiable functions (at least in a neighborhood
of $r_0$). Then, if we consider the notation
$\Psi_\rho^\prime(\mu_h,r):=\frac{d}{dr}\Psi_\rho(\mu_h,r)$, we
have
$$H_{\rho,\langle\pi,\mu\rangle}^{\prime\prime}(r_0)=\frac{\left(\sum_{h=1}^s\pi_h\Psi_\rho^\prime(\mu_h,r_0)\right)^2}
{\mathrm{Var}_\pi[\Psi_\rho(\mu_\cdot,r_0)]},$$ where (the last
equality holds by \eqref{eq:cond-on-pi-and-r0})
$$\mathrm{Var}_\pi[\Psi_\rho(\mu_\cdot,r_0)]:=\sum_{h=1}^s\pi_h\Psi_\rho^2(\mu_h,r_0)-\left(\sum_{h=1}^s\pi_h\Psi_\rho(\mu_h,r_0)\right)^2
=\sum_{h=1}^s\pi_h\Psi_\rho^2(\mu_h,r_0).$$
\end{proposition}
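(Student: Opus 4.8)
The plan is to treat $H_{\rho,\langle\pi,\mu\rangle}$ as the value function of the constrained problem whose Lagrange multiplier is $\lambda_*(r)$, and to extract its second derivative by the envelope/implicit-function machinery. To organise the computation I would introduce the auxiliary (cumulant-type) function
$$\Lambda(r,\lambda):=\log\left(\sum_{j=1}^s\pi_je^{-\lambda\Psi_\rho(\mu_j,r)}\right),$$
which is smooth in both arguments near $(r_0,0)$ thanks to the assumed continuous differentiability of the $\Psi_\rho(\mu_j,\cdot)$. By \eqref{eq:rf-expression-special-constraint} one has $H_{\rho,\langle\pi,\mu\rangle}(r)=-\Lambda(r,\lambda_*(r))$, and a direct check shows that the defining relation \eqref{eq:constraint-lambda} is exactly the stationarity condition $\Lambda_\lambda(r,\lambda_*(r))=0$, where subscripts denote partial derivatives.

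First I would establish that $\lambda_*(\cdot)$ is $C^1$ in a neighborhood of $r_0$. The key observation is that $\Lambda_{\lambda\lambda}(r,\lambda)$ is the variance of $\Psi_\rho(\mu_\cdot,r)$ under the tilted weights $\pi_je^{-\lambda\Psi_\rho(\mu_j,r)}/\sum_h\pi_he^{-\lambda\Psi_\rho(\mu_h,r)}$; evaluating at $(r_0,0)$, and using $\lambda_*(r_0)=0$ from Remark~\ref{rem:not-explicit} together with $\sum_j\pi_j\Psi_\rho(\mu_j,r_0)=0$ from \eqref{eq:cond-on-pi-and-r0}, this reduces to $\mathrm{Var}_\pi[\Psi_\rho(\mu_\cdot,r_0)]$. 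In case (ii) this is strictly positive, because the values $\Psi_\rho(\mu_j,r_0)$ cannot all vanish: otherwise $r_j^{(0)}=r_0$ for every $j$ and we would be in case (i). With this nondegeneracy the implicit function theorem applies to $\Lambda_\lambda(r,\lambda)=0$, producing a differentiable branch $\lambda_*(r)$ with $\lambda_*(r_0)=0$ and
$$\lambda_*'(r)=-\frac{\Lambda_{\lambda r}(r,\lambda_*(r))}{\Lambda_{\lambda\lambda}(r,\lambda_*(r))}.$$

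Next I would differentiate twice. By the chain rule together with the stationarity $\Lambda_\lambda(r,\lambda_*(r))=0$, the first derivative collapses (envelope theorem) to $H_{\rho,\langle\pi,\mu\rangle}'(r)=-\Lambda_r(r,\lambda_*(r))$; differentiating once more and inserting the expression for $\lambda_*'$ yields
$$H_{\rho,\langle\pi,\mu\rangle}''(r)=-\Lambda_{rr}(r,\lambda_*(r))+\frac{\Lambda_{r\lambda}(r,\lambda_*(r))^2}{\Lambda_{\lambda\lambda}(r,\lambda_*(r))}.$$
It then remains to evaluate the three partial derivatives at $(r_0,0)$. Writing $M(r,\lambda)=\sum_j\pi_je^{-\lambda\Psi_\rho(\mu_j,r)}$ and using $M(r_0,0)=1$, $M_\lambda(r_0,0)=-\sum_j\pi_j\Psi_\rho(\mu_j,r_0)=0$ and $M_r(r_0,0)=0$, a short computation gives $\Lambda_{rr}(r_0,0)=0$, $\Lambda_{r\lambda}(r_0,0)=-\sum_j\pi_j\Psi_\rho'(\mu_j,r_0)$ and $\Lambda_{\lambda\lambda}(r_0,0)=\sum_j\pi_j\Psi_\rho^2(\mu_j,r_0)=\mathrm{Var}_\pi[\Psi_\rho(\mu_\cdot,r_0)]$; substituting these delivers the claimed identity. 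The one genuinely delicate point is the nondegeneracy $\Lambda_{\lambda\lambda}(r_0,0)>0$, which is needed both to run the implicit function theorem and to keep the final denominator nonzero; the remaining work is routine differentiation of $\log\sum_j\pi_je^{-\lambda\Psi_\rho(\mu_j,r)}$ specialised to $\lambda=0$.
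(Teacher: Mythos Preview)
Your argument is correct and follows essentially the same route as the paper: both differentiate the expression \eqref{eq:rf-expression-special-constraint}, use the constraint \eqref{eq:constraint-lambda} to simplify, and apply the implicit function theorem to obtain $\lambda_*'(r_0)$. The only difference is organisational: you package the computation via the cumulant-type function $\Lambda(r,\lambda)$ and the envelope identity $\Lambda_\lambda(r,\lambda_*(r))=0$, obtaining the clean formula $H''=-\Lambda_{rr}+\Lambda_{r\lambda}^2/\Lambda_{\lambda\lambda}$, whereas the paper carries out the same differentiation by hand (arriving first at $H''(r_0)=\lambda_*'(r_0)\sum_h\pi_h\Psi_\rho'(\mu_h,r_0)$ and then computing $\lambda_*'(r_0)$ from $\Delta=-\Lambda_\lambda$); the underlying ingredients and the use of $\lambda_*(r_0)=0$ and \eqref{eq:cond-on-pi-and-r0} are identical.
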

\begin{proof}
For $r\in(\underline{r}_{\rho},\overline{r}_{\rho})$ we have
\begin{multline*}
H_{\rho,\langle\pi,\mu\rangle}^\prime(r)=\frac{\sum_{h=1}^s\pi_he^{-\lambda_*(r)\Psi_\rho(\mu_h,r)}
[\lambda_*^\prime(r)\Psi_\rho(\mu_h,r)+\lambda_*(r)\Psi_\rho^\prime(\mu_h,r)]}{\sum_{h=1}^s\pi_he^{-\lambda_*(r)\Psi_\rho(\mu_h,r)}}\\
=\lambda_*^\prime(r)\underbrace{\frac{\sum_{h=1}^s\pi_h\Psi_\rho(\mu_h,r)e^{-\lambda_*(r)\Psi_\rho(\mu_h,r)}}
{\sum_{h=1}^s\pi_he^{-\lambda_*(r)\Psi_\rho(\mu_h,r)}}}_{=0\
\mathrm{by}\ \eqref{eq:constraint-lambda}}
+\lambda_*(r)\frac{\sum_{h=1}^s\pi_h\Psi_\rho^\prime(\mu_h,r)e^{-\lambda_*(r)\Psi_\rho(\mu_h,r)}}{\sum_{h=1}^s\pi_he^{-\lambda_*(r)\Psi_\rho(\mu_h,r)}}\\
=\lambda_*(r)\frac{\sum_{h=1}^s\pi_h\Psi_\rho^\prime(\mu_h,r)e^{-\lambda_*(r)\Psi_\rho(\mu_h,r)}}{\sum_{h=1}^s\pi_he^{-\lambda_*(r)\Psi_\rho(\mu_h,r)}};
\end{multline*}
thus, since $\lambda_*(r_0)=0$, we get
$$H_{\rho,\langle\pi,\mu\rangle}^\prime(r_0)=\lambda_*(r_0)\frac{\sum_{h=1}^s\pi_h\Psi_\rho^\prime(\mu_h,r_0)e^{-\lambda_*(r_0)\Psi_\rho(\mu_h,r_0)}}
{\sum_{h=1}^s\pi_he^{-\lambda_*(r_0)\Psi_\rho(\mu_h,r_0)}}=0.$$
Moreover, again for
$r\in(\underline{r}_{\rho},\overline{r}_{\rho})$, we have
$$H_{\rho,\langle\pi,\mu\rangle}^{\prime\prime}(r)
=\lambda_*^\prime(r)\frac{\sum_{h=1}^s\pi_h\Psi_\rho^\prime(\mu_h,r)e^{-\lambda_*(r)\Psi_\rho(\mu_h,r)}}{\sum_{h=1}^s\pi_he^{-\lambda_*(r)\Psi_\rho(\mu_h,r)}}
+\lambda_*(r)\frac{d}{dr}\left(\frac{\sum_{h=1}^s\pi_h\Psi_\rho^\prime(\mu_h,r)e^{-\lambda_*(r)\Psi_\rho(\mu_h,r)}}
{\sum_{h=1}^s\pi_he^{-\lambda_*(r)\Psi_\rho(\mu_h,r)}}\right);$$
thus, by taking into account again $\lambda_*(r_0)=0$, we obtain
\begin{equation}\label{eq:second-derivative-before-IFT}
H_{\rho,\langle\pi,\mu\rangle}^{\prime\prime}(r_0)=\lambda_*^\prime(r_0)\sum_{h=1}^s\pi_h\Psi_\rho^\prime(\mu_h,r_0).
\end{equation}
We conclude by computing $\lambda_*^\prime(r_0)$ by means of the
implicit function theorem. By \eqref{eq:constraint-lambda} we
consider the function
$$\Delta(r,\lambda):=\frac{\sum_{j=1}^s\pi_j\Psi_\rho(\mu_j,r)e^{-\lambda\Psi_\rho(\mu_j,r)}}{\sum_{j=1}^s\pi_je^{-\lambda\Psi_\rho(\mu_j,r)}};$$
the partial derivatives of $\Delta$ are
\begin{multline*}
\Delta_r(r,\lambda)=\frac{1}{(\sum_{j=1}^s\pi_je^{-\lambda\Psi_\rho(\mu_j,r)})^2}
\cdot\left\{\left(\sum_{j=1}^s\pi_j\Psi_\rho^\prime(\mu_j,r)e^{-\lambda\Psi_\rho(\mu_j,r)}(1-\lambda\Psi_\rho(\mu_j,r))\right)
\left(\sum_{j=1}^s\pi_je^{-\lambda\Psi_\rho(\mu_j,r)}\right)\right.\\
\left.+\lambda\left(\sum_{j=1}^s\pi_j\Psi_\rho^\prime(\mu_j,r)e^{-\lambda\Psi_\rho(\mu_j,r)}\right)
\left(\sum_{j=1}^s\pi_j\Psi_\rho(\mu_j,r)e^{-\lambda\Psi_\rho(\mu_j,r)}\right)\right\}
\end{multline*}
and
\begin{multline*}
\Delta_\lambda(r,\lambda)=\frac{1}{(\sum_{j=1}^s\pi_je^{-\lambda\Psi_\rho(\mu_j,r)})^2}
\cdot\left\{-\left(\sum_{j=1}^s\pi_j\Psi_\rho^2(\mu_j,r)e^{-\lambda\Psi_\rho(\mu_j,r)}\right)
\left(\sum_{j=1}^s\pi_je^{-\lambda\Psi_\rho(\mu_j,r)}\right)\right.\\
\left.+\left(\sum_{j=1}^s\pi_j\Psi_\rho(\mu_j,r)e^{-\lambda\Psi_\rho(\mu_j,r)}\right)^2\right\};
\end{multline*}
thus, by taking into account \eqref{eq:cond-on-pi-and-r0} and
$\lambda_*(r_0)=0$, we have
$$\Delta_r(r_0,\lambda_*(r_0))=\sum_{j=1}^s\pi_j\Psi_\rho^\prime(\mu_j,r_0)\
\mbox{and}\
\Delta_\lambda(r_0,\lambda_*(r_0))=-\sum_{j=1}^s\pi_j\Psi_\rho^2(\mu_j,r_0),$$
and the implicit function theorem yields
$$\lambda_*^\prime(r_0)=\left.-\frac{\Delta_r(r,\lambda)}{\Delta_\lambda(r,\lambda)}\right|_{(r,\lambda)=(r_0,\lambda_*(r_0))}
=\frac{\sum_{j=1}^s\pi_j\Psi_\rho^\prime(\mu_j,r_0)}{\sum_{j=1}^s\pi_j\Psi_\rho^2(\mu_j,r_0)}.$$
We conclude the proof by combining this equality and
\eqref{eq:second-derivative-before-IFT}.
\end{proof}

\section{Examples}\label{sec:examples}
In this section we consider some examples of risk measures
satisfying Condition
\ref{cond:equivalent-formulation-on-constraint}. The first example
concerns risk measures that are linearly dependent with respect to
the weights, and we present two specific cases. Other examples
consider quantiles and the class of shortfall risk measures, that
includes the entropic risk measures as a special case. We conclude
the section with two examples: one for an insurance application
with $s=2$, and one where we obtain explicit expressions for
$s=3$. In view of what follows we consider the notation $F_\mu$
for the distribution function associated with the law $\mu$,
namely
$$F_\mu(x):=\mu((-\infty,x]),\quad \mbox{for all}\ x\in\mathbb{R}.$$
We remark that, when we deal with a finite mixture
$\sum_{j=1}^sp_j\mu_j$ of some laws $\mu_1,\ldots,\mu_s$ (for some
$(p_1,\ldots,p_s)\in\Sigma_s$), we have
$F_{\sum_{j=1}^sp_j\mu_j}=\sum_{j=1}^sp_jF_{\mu_j}$.

\begin{example}[Linear dependence with respect to the weights]\label{ex:linear-dependence}
We assume that the function \eqref{eq:function-of-weights}
satisfies the following condition:
$$\rho\left(\sum_{j=1}^sp_j\mu_j\right)=\sum_{j=1}^sp_j\rho(\mu_j)\ \mbox{for all}\ (p_1,\ldots,p_s)\in\Sigma_s.$$
Obviously we have a continuous function. In this case one has
$$\Psi_\rho(\mu_i,r)=\rho(\mu_i)-r$$
which yields $r_i^{(0)}=\rho(\mu_i)$ and
$r_0=\sum_{j=1}^s\pi_j\rho(\mu_j)$; moreover
$$\underline{r}_{\rho}:=\min\{\rho(\mu_i):i\in\{1,\ldots,s\}\}\
\mbox{and}\
\overline{r}_{\rho}:=\max\{\rho(\mu_i):i\in\{1,\ldots,s\}\}.$$ Now
we present some formulas for the rate function
$H_{\rho,\langle\pi,\mu\rangle}$ when
$\underline{r}_{\rho}<\overline{r}_{\rho}$; in view of this we
remark that, for $s=2$, we have
$\underline{r}_{\rho}<\overline{r}_{\rho}$ if and only if
$\rho(\mu_1)\neq\rho(\mu_2)$. By
\eqref{eq:rf-expression-special-constraint}, for
$r\in(\underline{r}_{\rho},\overline{r}_{\rho})$, we get
$$H_{\rho,\langle\pi,\mu\rangle}(r)=-\log\left(\sum_{h=1}^s\pi_he^{-\lambda_*(r)\Psi_\rho(\mu_h,r)}\right)
=-r\lambda_*(r)-\log\left(\sum_{h=1}^s\pi_he^{-\lambda_*(r)\rho(\mu_h)}\right).$$
Moreover, by Remark \ref{rem:case-s=2} concerning the case $s=2$,
for $r\in(\underline{r}_{\rho},\overline{r}_{\rho})$ we get
$$H_{\rho,\langle\pi,\mu\rangle}(r)=
-\log\left(\sum_{h=1}^2\pi_h\left(-\frac{\pi_1(\rho(\mu_1)-r)}{\pi_2(\rho(\mu_2)-r)}\right)
^{\frac{\rho(\mu_h)-r}{\rho(\mu_2)-\rho(\mu_1)}}\right);$$ in
particular we can easily check that this formula yields
$H_{\rho,\langle\pi,\mu\rangle}(r_0)=0$. Finally, by Proposition
\ref{prop:second-derivative} (and after some computations where we
take into account that $r_0=\sum_{j=1}^s\pi_j\rho(\mu_j)$), we get
$$H_{\rho,\langle\pi,\mu\rangle}^{\prime\prime}(r_0)=\frac{1}{\sum_{h=1}^s\pi_h(\rho(\mu_h)-r_0)^2}=\frac{1}{\sum_{h=1}^s\pi_h\rho^2(\mu_h)-r_0^2}.$$
\end{example}

Here we briefly present two particular cases concerning Example
\ref{ex:linear-dependence}.
\begin{itemize}
\item The expected value (when $\mu_1,\ldots,\mu_s$ are probability
measures of integrable random variables); in fact we have
$$\int_\mathbb{R}x\sum_{j=1}^sp_j\mu_j(dx)=\sum_{j=1}^sp_j\int_\mathbb{R}x\mu_j(dx).$$
\item The Expected Shortfall $\mathrm{ES}_\alpha$, for $\alpha\in(0,1)$,
when $\mu_1,\ldots,\mu_s$ have the same $\alpha$-quantile, namely
when
$F_{\mu_1}^{-1}(\alpha)=\cdots=F_{\mu_s}^{-1}(\alpha)=:r_\alpha$.
We recall that
$$\mathrm{ES}_\alpha(\mu):=\frac{1}{1-\alpha}\int_{F_{\mu}^{-1}(\alpha)}^{\infty}x\mu(dx),$$
and that we have
$\left(\sum_{j=1}^sp_jF_{\mu_j}\right)^{-1}(\alpha)=r_\alpha$.
Therefore
\begin{align*}
\mathrm{ES}_\alpha\left(\sum_{j=1}^sp_j\mu_j\right)
&=\frac{1}{1-\alpha}\int_{\left(\sum_{j=1}^sp_jF_{\mu_j}\right)^{-1}(\alpha)}^\infty
x\sum_{j=1}^sp_j\mu_j(dx)\\
&=\sum_{j=1}^s\frac{p_j}{1-\alpha}\int_{r_\alpha}^\infty
x\mu_j(dx)=\sum_{j=1}^sp_j\mathrm{ES}_\alpha(\mu_j).
\end{align*}
\end{itemize}

We conclude with the final examples.

\begin{example}[Quantiles]\label{ex:quantiles}
Let us consider $\alpha\in(0,1)$ and strictly increasing and
continuous distribution functions $F_{\mu_1},\ldots,F_{\mu_s}$ on
the same interval. We assume that the function
\eqref{eq:function-of-weights} is defined by
$$\Sigma_s\ni(p_1,\ldots,p_s)\mapsto\rho\left(\sum_{j=1}^sp_j\mu_j\right):=\left(\sum_{j=1}^sp_jF_{\mu_j}\right)^{-1}(\alpha).$$
This function is continuous (see Appendix for details). In this
case one has
$$\Psi_\rho(\mu_i,r)=\alpha-F_{\mu_i}(r)$$
which yields $r_i^{(0)}=F_{\mu_i}^{-1}(\alpha)$ and
$r_0=\left(\sum_{j=1}^s\pi_jF_{\mu_j}\right)^{-1}(\alpha)$;
moreover
$$\underline{r}_{\rho}:=\min\left\{F_{\mu_i}^{-1}(\alpha):i\in\{1,\ldots,s\}\right\}\
\mbox{and}\
\overline{r}_{\rho}:=\max\left\{F_{\mu_i}^{-1}(\alpha):i\in\{1,\ldots,s\}\right\}.$$
Now we present some formulas for the rate function
$H_{\rho,\langle\pi,\mu\rangle}$ when
$\underline{r}_{\rho}<\overline{r}_{\rho}$; in view of this we
remark that, for $s=2$, we have
$\underline{r}_{\rho}<\overline{r}_{\rho}$ if and only if
$F_{\mu_1}^{-1}(\alpha)\neq F_{\mu_2}^{-1}(\alpha)$. By
\eqref{eq:rf-expression-special-constraint}, for
$r\in(\underline{r}_{\rho},\overline{r}_{\rho})$, we get
$$H_{\rho,\langle\pi,\mu\rangle}(r)=-\log\left(\sum_{h=1}^s\pi_he^{-\lambda_*(r)\Psi_\rho(\mu_h,r)}\right)
=\alpha\lambda_*(r)-\log\left(\sum_{h=1}^s\pi_he^{\lambda_*(r)F_{\mu_h}(r)}\right).$$
Moreover, by Remark \ref{rem:case-s=2} concerning the case $s=2$,
for $r\in(\underline{r}_{\rho},\overline{r}_{\rho})$ we get
$$H_{\rho,\langle\pi,\mu\rangle}(r)=
-\log\left(\sum_{h=1}^2\pi_h\left(-\frac{\pi_1(\alpha-F_{\mu_1}(r))}{\pi_2(\alpha-F_{\mu_2}(r))}\right)
^{\frac{\alpha-F_{\mu_h}(r)}{F_{\mu_1}(r)-F_{\mu_2}(r)}}\right);$$
in particular we can easily check that this formula yields
$H_{\rho,\langle\pi,\mu\rangle}(r_0)=0$ because
$\sum_{j=1}^s\pi_jF_{\mu_j}(r_0)=\alpha$. Finally, by Proposition
\ref{prop:second-derivative} (and after some computations where we
take into account again that
$\sum_{j=1}^s\pi_jF_{\mu_j}(r_0)=\alpha$), we get
$$H_{\rho,\langle\pi,\mu\rangle}^{\prime\prime}(r_0)=\frac{\left(\sum_{h=1}^s\pi_hF_{\mu_i}^\prime(r)\right)^2}{\sum_{h=1}^s\pi_h(\alpha-F_{\mu_h}(r_0))^2}
=\frac{\left(\sum_{h=1}^s\pi_hF_{\mu_i}^\prime(r)\right)^2}{\sum_{h=1}^s\pi_hF_{\mu_h}^2(r_0)-\alpha^2}.$$
\end{example}

\begin{example}[Shortfall risk measures]\label{ex:entropic}
We recall some preliminaries (see \cite{FS02}). Given a
\textit{loss function} $\ell:\mathbb{R}\to\mathbb{R}$ (that is a
convex, increasing and not identically constant function) and an
interior point $x_0$ in the range of $\ell$, a shortfall risk
measure is defined by
$$\rho(\mu):=\inf\left\{m\in\mathbb{R}:\int_{\mathbb{R}}\ell(x-m)\mu(dx)\leq x_0\right\};$$
moreover it is the unique solution $m$ to the equation
$$\int_{\mathbb{R}}\ell(x-m)\mu(dx)=x_0.$$
We assume that the function \eqref{eq:function-of-weights} is
defined by
$$\Sigma_s\ni(p_1,\ldots,p_s)\mapsto\rho\left(\sum_{j=1}^sp_j\mu_j\right):=r,\ \mbox{where}\ \sum_{j=1}^sp_j\int_{\mathbb{R}}\ell(x-r)\mu_j(dx)=x_0.$$
The continuity of this function can be checked by adapting the
proof in the Appendix. In this case one has
$$\Psi_\rho(\mu_i,r)=\int_{\mathbb{R}}\ell(x-r)\mu_i(dx)-x_0.$$

From now on, in order to have explicit results, we continue our
analysis for the class of entropic risk measures, that is the case
where we have the loss function $\ell(x)=e^{\theta x}$, for
$\theta>0$, and $x_0=1$. We can check the following equalities:
$$\rho(\mu)=\frac{1}{\theta}\log\left(\int_{\mathbb{R}}e^{\theta x}\mu(dx)\right);$$
the function \eqref{eq:function-of-weights} becomes
$$\Sigma_s\ni(p_1,\ldots,p_s)\mapsto\rho\left(\sum_{j=1}^sp_j\mu_j\right):=\frac{1}{\theta}\log\int_\mathbb{R}e^{\theta x}\sum_{j=1}^sp_j\mu_j(dx);$$
(so we have
$\Sigma_s\ni(p_1,\ldots,p_s)\mapsto\frac{1}{\theta}\log\left(\sum_{j=1}^sp_je^{\theta\rho(\mu_j)}\right)$,
which is a continuous function); the function $\Psi_\rho(\mu_i,r)$
can be rewritten as
$$\Psi_\rho(\mu_i,r)=e^{\theta\rho(\mu_i)}-e^{\theta r},$$
which yields $r_i^{(0)}=\rho(\mu_i)$ and
$r_0=\frac{1}{\theta}\log\left(\sum_{j=1}^s\pi_je^{\theta\rho(\mu_j)}\right)$;
moreover
$$\underline{r}_{\rho}:=\min\{\rho(\mu_i):i\in\{1,\ldots,s\}\}\
\mbox{and}\
\overline{r}_{\rho}:=\max\{\rho(\mu_i):i\in\{1,\ldots,s\}\}.$$ Now
we present some formulas for the rate function
$H_{\rho,\langle\pi,\mu\rangle}$ when
$\underline{r}_{\rho}<\overline{r}_{\rho}$; in view of this we
remark that, for $s=2$, we have
$\underline{r}_{\rho}<\overline{r}_{\rho}$ if and only if
$\rho(\mu_1)\neq\rho(\mu_2)$. By
\eqref{eq:rf-expression-special-constraint}, for
$r\in(\underline{r}_{\rho},\overline{r}_{\rho})$, we get
$$H_{\rho,\langle\pi,\mu\rangle}(r)=-\log\left(\sum_{h=1}^s\pi_he^{-\lambda_*(r)\Psi_\rho(\mu_h,r)}\right)
=-e^{\theta
r}\lambda_*(r)-\log\left(\sum_{h=1}^s\pi_he^{-\lambda_*(r)e^{\theta\rho(\mu_i)}}\right).$$
Moreover, by Remark \ref{rem:case-s=2} concerning the case $s=2$,
for $r\in(\underline{r}_{\rho},\overline{r}_{\rho})$ we get
$$H_{\rho,\langle\pi,\mu\rangle}(r)=
-\log\left(\sum_{h=1}^2\pi_h\left(-\frac{\pi_1(e^{\theta\rho(\mu_1)}-e^{\theta
r})}{\pi_2(e^{\theta\rho(\mu_2)}-e^{\theta r})}\right)
^{\frac{e^{\theta\rho(\mu_h)}-e^{\theta
r}}{e^{\theta\rho(\mu_2)}-e^{\theta\rho(\mu_1)}}}\right);$$ in
particular we can easily check that this formula yields
$H_{\rho,\langle\pi,\mu\rangle}(r_0)=0$ because $e^{\theta
r_0}=\sum_{j=1}^s\pi_je^{\theta\rho(\mu_j)}$. Finally, by
Proposition \ref{prop:second-derivative} (and after some
computations where we take into account again that $e^{\theta
r_0}=\sum_{j=1}^s\pi_je^{\theta\rho(\mu_j)}$), we get
$$H_{\rho,\langle\pi,\mu\rangle}^{\prime\prime}(r_0)=\frac{(\theta e^{\theta r_0})^2}{\sum_{h=1}^s\pi_h(e^{\theta\rho(\mu_h)}-e^{\theta r_0})^2}
=\frac{(\theta e^{\theta
r_0})^2}{\sum_{h=1}^s\pi_he^{2\theta\rho(\mu_h)}-e^{2\theta
r_0}}.$$
\end{example}
\begin{example}[An insurance example with
$s=2$]\label{ex:insurance-with-s=2} In actuarial science mixture
distributions are particularly relevant for modeling different
claim sizes. Consider, for instance, a car insurance context where
individuals are grouped into $s$ categories depending on their
accident history; we assume for convenience that $s=2$. Each group
claim distribution $\mu_j$ may be modeled using an exponential
distribution with parameter $\lambda_j$, $j\in\{1,2\}$; thus
$F_{\mu_j}(x)=1-e^{-\lambda_j(x)}$, with $x>0$ and the probability
of arrival of a claim in group $j$ is $\pi_j>0$, with
$\pi_1+\pi_2=1$. We assume that the $\lambda_j$'s are given and
without loss of generality $\lambda_1<\lambda_2$ (for
$\lambda_1=\lambda_2$ we find the usual exponential distribution).
Instead the $\pi_j$'s are estimated empirically from  a sample of
$n$ claims; denoting $X_i$ a random variable taking values $1$ or
$2$ depending on whether  claim $i$ belongs to the group $j$, we
obtain $X_i=j$ with probability $\pi_j$ and
$\hat{\pi}_n(j)=\frac{1}{n}\sum_{i=1}^n\delta_{X_i=j}$ for all
$j\in\{1,2\}$. In this example, we are interested in understanding
what happens to $\rho\left(\sum_{j=1}^s\hat{\pi}_n(j)\mu_j\right)$
as $n\to\infty$ and the risk measure $\rho$ is quantile at level
$\alpha\in(0,1)$, also known as Value-at-Risk
($\mathrm{VaR}_\alpha$) in the risk management literature. We
recall that for a model $\mu$ with continuous and strictly
increasing distribution function $F$, we have
$\rho(\mu)=F^{-1}(\alpha)$, therefore we have
$$
\rho(\mu_j)=-\frac{1}{\lambda_j}\log(1-\alpha), \quad j\in\{1,2\}.
$$
From Example \ref{ex:quantiles}, we know that
$$
\Psi_{\rho}(\mu_j,r)=\alpha-F_{\mu_j}(r)=\alpha-1+e^{-\lambda_j r},
$$
and
$$
\overline{r}_\rho=r_1^{(0)}=-\frac{1}{\lambda_1}\log(1-\alpha)>-\frac{1}{\lambda_2}\log(1-\alpha)=r_2^{(0)}=\underline{r}_\rho.
$$
From Remark \ref{rem:case-s=2}, for $r\in(\underline{r}_\rho,\overline{r}_\rho)$ we easily find
$$
\lambda_{*}(r)=-\frac{1}{e^{-\lambda_2r}-e^{-\lambda_1r}}\log\left(-\frac{\pi_1(\alpha-1+e^{-\lambda_1r})}{\pi_2(\alpha-1+e^{-\lambda_2r})}\right)
$$
and the weights $p=(p_1,p_2)$ in \eqref{eq:def-minimiser-special-constraint} which attain the infimum in \eqref{eq:variational-formula-special-constraint} are given by
$$
p_j(r)=\frac{\pi_je^{-\lambda_*(r)(\alpha-1+e^{-\lambda_jr})}}{\sum_{j=1}^2\pi_je^{-\lambda_*(r)(\alpha-1+e^{-\lambda_jr})}},\quad j\in\{1,2\}.
$$
We then obtain the rate function:
$$
H_{\rho,<\pi,\mu>}(r)=-\log\left(\pi_1\left(-\frac{\pi_1(\alpha-1+e^{-\lambda_1r})}{\pi_2(\alpha-1+e^{-\lambda_2r})}\right)^{\frac{\alpha-1+e^{-\lambda_1r}}{e^{-\lambda_2r}-e^{-\lambda_1r}}}+\pi_2\left(-\frac{\pi_1(\alpha-1+e^{-\lambda_1r})}{\pi_2(\alpha-1+e^{-\lambda_2r})}\right)^{\frac{\alpha-1+e^{-\lambda_2r}}{e^{-\lambda_2r}-e^{-\lambda_1r}}}\right)
.$$
We refer the interested reader to \cite{LLW12} for a more detailed analysis of the use of exponential mixture models in insurance.
\end{example}

\begin{example}[A specific example with $s=3$]\label{ex:specific-with-s=3}
We refer to Condition
\ref{cond:equivalent-formulation-on-constraint} with $s=3$ and,
for some $a>0$ (and a suitable strictly decreasing function
$\Psi$), we set
$$\Psi_\rho(\mu_i,r):=\Psi(r)+(i-1)a\ (\mbox{for all}\ i\in\{1,2,3\});$$
(we remark that we are in this situation when we deal with Example
\ref{ex:linear-dependence}, with $s=3$; in such a case we have
$\Psi(r)=\rho(\mu_1)-r$, $\rho(\mu_2)=\rho(\mu_1)+a$, and
$\rho(\mu_3)=\rho(\mu_1)+2a$).

We recall that
$$\underline{r}_{\rho}\leq r_1^{(0)}<r_2^{(0)}<r_3^{(0)}\leq\overline{r}_{\rho},$$
where $r_i^{(0)}=\Psi^{-1}((i-1)a)$ (for all $i\in\{1,2,3\}$).
Moreover, after some computations, \eqref{eq:cond-on-pi-and-r0}
with $s=3$ yields
$$r_0=\Psi^{-1}(-a(\pi_2+2\pi_3)).$$
Now we compute the rate function $H_{\rho,\langle\pi,\mu\rangle}$
in Theorem \ref{prop:main}(ii). We have
$$H_{\rho,\langle\pi,\mu\rangle}(\Psi^{-1}(0))=-\log\pi_1\ \mbox{and}\ H_{\rho,\langle\pi,\mu\rangle}(\Psi^{-1}(-2a))=-\log\pi_3,$$
which concern the cases $r=\underline{r}_{\rho}$ and
$r=\overline{r}_{\rho}$. In what follows we take
$r\in(\underline{r}_{\rho},\overline{r}_{\rho})=(\Psi^{-1}(0),\Psi^{-1}(-2a))$.
Firstly \eqref{eq:constraint-lambda} yields
$$e^{-\lambda_*(r)\Psi(r)}\left(\pi_1\Psi(r)+\pi_2(\Psi(r)+a)e^{-\lambda_*(r)a}+\pi_3(\Psi(r)+2a)e^{-2\lambda_*(r)a}\right)=0,$$
and therefore
$$e^{-\lambda_*(r)a}=\frac{-\pi_2(\Psi(r)+a)+\sqrt{\pi_2^2(\Psi(r)+a)^2-4\pi_1\pi_3\Psi(r)(\Psi(r)+2a)}}{2\pi_3(\Psi(r)+2a)};$$
we remark that
$\pi_2^2(\Psi(r)+a)^2-4\pi_1\pi_3\Psi(r)(\Psi(r)+2a)\geq 0$
because $\Psi(r)<0$ and $\Psi(r)+2a>0$, and
$$\sqrt{\pi_2^2(\Psi(r)+a)^2-4\pi_1\pi_3\Psi(r)(\Psi(r)+2a)}\geq|\pi_2(\Psi(r)+a)|.$$
Thus we easily obtain $\lambda_*(r)$ and the rate function
expression is
$$H_{\rho,\langle\pi,\mu\rangle}(r)=-\log\left(\sum_{j=1}^3
\pi_j\left(\frac{-\pi_2(\Psi(r)+a)+\sqrt{\pi_2^2(\Psi(r)+a)^2-4\pi_1\pi_3\Psi(r)(\Psi(r)+2a)}}{2\pi_3(\Psi(r)+2a)}\right)^{\Psi_\rho(\mu_j,r)/a}\right).$$

A final remark concerns the condition $\lambda_*(r_0)=0$ (see
Remark \ref{rem:not-explicit}). The above formulas yield
$$\frac{-\pi_2(\Psi(r_0)+a)+\sqrt{\pi_2^2(\Psi(r_0)+a)^2-4\pi_1\pi_3\Psi(r_0)(\Psi(r_0)+2a)}}{2\pi_3(\Psi(r_0)+2a)}=1$$
and, after some computations, we get
$$\pi_1\Psi(r_0)+\pi_2(\Psi(r_0)+a)+\pi_3(\Psi(r_0)+2a)=0$$
which recovers \eqref{eq:cond-on-pi-and-r0} (with $s=3$).
\end{example}

\paragraph{Acknowledgements.} The authors  are
grateful to  anonymous reviewers for their careful reports which
highly improved the paper.

%\bibliographystyle{chicago}
%\bibliography{mybibBMP.bib}

\begin{thebibliography}{}

\bibitem[\protect\citeauthoryear{Acciaio and Svindland}{Acciaio and
  Svindland}{2013}]{AS13}
Acciaio, B. and G.~Svindland (2013).
\newblock Are law-invariant risk functions concave on distributions?
\newblock {\em Dependence Modeling\/}~{\em 1}, 54--64.

\bibitem[\protect\citeauthoryear{Barrieu and Scandolo}{Barrieu and
  Scandolo}{2015}]{BS15}
Barrieu, P. and G.~Scandolo (2015).
\newblock Assessing financial model risk.
\newblock {\em European Journal of Operational Research\/}~{\em 242\/}(2),
  546--556.

\bibitem[\protect\citeauthoryear{Bellini and Bignozzi}{Bellini and
  Bignozzi}{2015}]{BB15}
Bellini, F. and V.~Bignozzi (2015).
\newblock On elicitable risk measures.
\newblock {\em Quantitative Finance\/}~{\em 15\/}(5), 725--733.

\bibitem[\protect\citeauthoryear{Bernardi, Maruotti, and Petrella}{Bernardi
  et~al.}{2012}]{BMP12}
Bernardi, M., A.~Maruotti, and L.~Petrella (2012).
\newblock Skew mixture models for loss distributions: a {B}ayesian approach.
\newblock {\em Insurance: Mathematics and Economics\/}~{\em 51\/}(3), 617--623.

\bibitem[\protect\citeauthoryear{Bernardi, Maruotti, and Petrella}{Bernardi
  et~al.}{2017}]{BMP17}
Bernardi, M., A.~Maruotti, and L.~Petrella (2017).
\newblock Multiple risk measures for multivariate dynamic heavy-tailed models.
\newblock {\em Journal of Empirical Finance\/}~{\em 43}, 1--32.

\bibitem[\protect\citeauthoryear{Bignozzi and Tsanakas}{Bignozzi and
  Tsanakas}{2016}]{BT16}
Bignozzi, V. and A.~Tsanakas (2016).
\newblock Model uncertainty in risk capital measurement.
\newblock {\em The Journal of Risk\/}~{\em 18\/}(3), 1--24.

\bibitem[\protect\citeauthoryear{Cairns}{Cairns}{2000}]{C00}
Cairns, A.~J. (2000).
\newblock A discussion of parameter and model uncertainty in insurance.
\newblock {\em Insurance: Mathematics and Economics\/}~{\em 27\/}(3), 313--330.

\bibitem[\protect\citeauthoryear{Cont, Deguest, and Scandolo}{Cont
  et~al.}{2010}]{CDS10}
Cont, R., R.~Deguest, and G.~Scandolo (2010).
\newblock Robustness and sensitivity analysis of risk measurement procedures.
\newblock {\em Quantitative Finance\/}~{\em 10(6)}, 593--606.

%\bibitem[\protect\citeauthoryear{}{Csisz{\'a}r}{1975}]{C75}
%Csisz{\'a}r, I. (1975).
%\newblock I-divergence geometry of probability distributions and minimization problems.
%\newblock {\em The Annals of Probability\/}, 146--158.


\bibitem[\protect\citeauthoryear{Delbaen, Bellini, Bignozzi, and
  Ziegel}{Delbaen et~al.}{2016}]{DBBZ16}
Delbaen, F., F.~Bellini, V.~Bignozzi, and J.~F. Ziegel (2016).
\newblock Risk measures with the {C}x{LS} property.
\newblock {\em Finance and {S}tochastics\/}~{\em 20\/}(2), 433--453.

\bibitem[\protect\citeauthoryear{Dembo and Zeitouni}{Dembo and
  Zeitouni}{1998}]{DemboZeitouni}
Dembo, A. and O.~Zeitouni (1998).
\newblock {\em Large Deviations Techniques and Applications\/} (2nd ed.).
\newblock Springer.

\bibitem[\protect\citeauthoryear{Feray, Meliot, and Nikeghbali}{F\'{e}ray
  et~al.}{2016}]{FMN}
F\'{e}ray, V., P.-L. M\'{e}liot, and A. Nikeghbali (2016).
\newblock {\em Mod-$\phi$ Convergence. Normality Zones and Precise Deviations\/}.
\newblock Springer.

\bibitem[\protect\citeauthoryear{F{\"o}llmer and Schied}{F{\"o}llmer and
  Schied}{2002}]{FS02}
 F{\"o}llmer, H. and A.~Schied (2002)
 \newblock Convex measures of risk and trading constraints.
   \newblock {\em Finance and stochastic\/}~{\em 6\/}(4), 429--447.

\bibitem[\protect\citeauthoryear{F{\"o}llmer and Schied}{F{\"o}llmer and
  Schied}{2016}]{FS16}
F{\"o}llmer, H. and A.~Schied (2016).
\newblock {\em Stochastic Finance: An Introduction in Discrete Time}.
\newblock Walter De Gruyter.

\bibitem[\protect\citeauthoryear{Gilboa and Schmeidler}{Gilboa and
  Schmeidler}{1989}]{GS89}
Gilboa, I. and D.~Schmeidler (1989).
\newblock Maxmin expected utility with non-unique prior.
\newblock {\em Journal of Mathematical Economics\/}~{\em 18\/}(2), 141--153.

\bibitem[\protect\citeauthoryear{Klugman, Panjer, and Willmot}{Klugman
  et~al.}{2012}]{KPW12}
Klugman, S.~A., H.~H. Panjer, and G.~E. Willmot (2012).
\newblock {\em Loss Models: From Data to Decisions}.
\newblock John Wiley \& Sons.

\bibitem[\protect\citeauthoryear{Lee,  Li and Wong}{Lee et~al.}{2012}]{LLW12}
Lee, D., Li, W. K. and T.S.T.~Wong (2012).
\newblock Modeling insurance claims via a mixture exponential model combined with peaks-over-threshold approach.
\newblock {\em Insurance: Mathematics and Economics\/}~{\em 51\/}(3), 538--550.

\bibitem[\protect\citeauthoryear{Lee and Lin}{Lee and Lin}{2010}]{LL10}
Lee, S.~C. and X.~S. Lin (2010).
\newblock Modeling and evaluating insurance losses via mixtures of {E}rlang
  distributions.
\newblock {\em North American Actuarial Journal\/}~{\em 14\/}(1), 107--130.




\bibitem[\protect\citeauthoryear{McLachlan and Peel}{McLachlan and
  Peel}{2004}]{MP04}
McLachlan, G. and D.~Peel (2004).
\newblock {\em Finite Mixture Models}.
\newblock John Wiley \& Sons.

\bibitem[\protect\citeauthoryear{Pesaran, Schleicher, and Zaffaroni}{Pesaran
  et~al.}{2009}]{PSZ09}
Pesaran, M.~H., C.~Schleicher, and P.~Zaffaroni (2009).
\newblock Model averaging in risk management with an application to futures
  markets.
\newblock {\em Journal of Empirical Finance\/}~{\em 16\/}(2), 280--305.

\bibitem[\protect\citeauthoryear{Raftery, Madigan, and Hoeting}{Raftery
  et~al.}{1997}]{RMH97}
Raftery, A.~E., D.~Madigan, and J.~A. Hoeting (1997).
\newblock Bayesian model averaging for linear regression models.
\newblock {\em Journal of the American Statistical Association\/}~{\em
  92\/}(437), 179--191.

\bibitem[\protect\citeauthoryear{Varadhan}{Varadhan}{2003}]{Varadhan}
Varadhan, S. R.~S. (2003).
\newblock Large deviations and entropy.
\newblock In {\em Entropy}, Princeton Ser. Appl. Math., pp.\  199--214.
  Princeton University Press.

\bibitem[\protect\citeauthoryear{Weber}{Weber}{2006}]{W06}
Weber, S. (2006).
\newblock Distribution-invariant risk measures, information, and dynamic
  consistency.
\newblock {\em Mathematical Finance\/}~{\em 16\/}(2), 419--441.

\bibitem[\protect\citeauthoryear{Weber}{Weber}{2007}]{W07}
Weber, S. (2007).
\newblock Distribution-invariant risk measures, entropy, and large deviations.
\newblock {\em Journal of {A}pplied {P}robability\/}~{\em 44\/}(1), 16--40.

\bibitem[\protect\citeauthoryear{Ziegel}{Ziegel}{2016}]{Z16}
Ziegel, J.~F. (2016).
\newblock Coherence and elicitability.
\newblock {\em Mathematical Finance\/}~{\em 26\/}(4), 901--918.

\end{thebibliography}

\section*{Appendix: The continuity of $\rho\left(\sum_{j=1}^sp_j\mu_j\right)$ in Example \ref{ex:quantiles}}
We remark that, if we set
$\varphi(x,p_1,\ldots,p_s):=\sum_{j=1}^sp_jF_{\mu_j}(x)$, we are
interested in the function
$$x(p_1,\ldots,p_s):=\left(\sum_{j=1}^sp_jF_{\mu_j}\right)^{-1}(\alpha),$$
which is the implicit function defined by the condition
$\varphi(x,p_1,\ldots,p_s)=\alpha$.

We assume that $x(p_1,\ldots,p_s)$ is not continuous at some point
$(q_1,\ldots,q_s)$. Then we can find a sequence
$\{(p_1^{(n)},\ldots,p_s^{(n)}):n\geq 1\}$ which converges to
$(q_1,\ldots,q_s)$, and $\{x(p_1^{(n)},\ldots,p_s^{(n)}):n\geq
1\}$ converges to some limit $\ell$, with $\ell\neq
x(q_1,\ldots,q_s)$; moreover we have
$$\varphi(x(p_1^{(n)},\ldots,p_s^{(n)}),p_1^{(n)},\ldots,p_s^{(n)})=\alpha\ (\mbox{for all}\ n\geq 1)$$
and, if we take the limit as $n\to\infty$, we get
$\sum_{j=1}^sq_jF_{\mu_j}(\ell)=\alpha$ (by the continuity of the
functions $F_{\mu_1},\ldots,F_{\mu_s}$). The last equality yields
$\ell=x(q_1,\ldots,q_s)$ by construction, and this is a
contradiction.

\end{document}